\theoremstyle{plain}
\newtheorem{theorem}{Theorem}
\newtheorem{corollary}[theorem]{Corollary}
\newtheorem{proposition}[theorem]{Proposition}
\theoremstyle{definition}
\newtheorem{condition}[theorem]{Condition}
\newtheorem{observation}[theorem]{Observation}
\theoremstyle{remark}
\newcommand{\wick}[1]{\text{:}#1\text{:}}
\newcommand{\st}[1]{\mathbf{#1}}
\begin{document}

\title{Local Thermal Equilibrium and KMS states\\ in Curved Spacetime}
\author{Christoph Solveen\\\quad\\ \small Institut f\"ur Theoretische Physik, Universit\"at G\"ottingen,\\ \small Friedrich-Hund-Platz 1, 37077 G\"ottingen, Germany\\
\quad\\ \small Christoph.Solveen@theorie.physik.uni-goettingen.de
\quad}
\maketitle

\begin{abstract}
On the example of a free massless and conformally coupled scalar field, it is argued that in quantum field theory in curved spacetimes with time-like Killing field, the corresponding KMS states (generalized Gibbs ensembles) at parameter $\beta>0$ need not possess a definite temperature in the sense of the zeroth law. In fact, these states, although passive in the sense of the second law, are not always in local thermal equilibrium (LTE). A criterion characterizing LTE states with sharp local temperature is discussed. Moreover, a proposal is made for fixing the renormalization freedom of composite fields which serve as ``thermal observables'' and a new definition of the thermal energy of LTE states is introduced. Based on these results a general relation between the local temperature and the parameter $\beta$ is established for KMS states in (Anti) de Sitter spacetime.
\end{abstract}

\renewcommand\abstractname{Mathematics Subject Classification (2010)}

\begin{abstract}
  81T20, 81T28, 82C99
\end{abstract}

\section{Introduction}

In quantum field theory (QFT) in curved spacetime, the determination of thermal parameters such as temperature is made difficult by the presence of tidal forces. Due to the general lack of global time-like isometries there is no acceptable notion of global thermal equilibrium in the presence of curvature. However, in the special case of spacetimes with time-like Killing field, the corresponding isometries induce a time evolution for quantum fields propagating on such stationary backgrounds \cite{Full89,Wald94}. The KMS condition then distinguishes states which are stationary, stable and passive with respect to this dynamics at some parameter $\beta>0$. This condition is the suitable replacement for Gibbs ensembles in the case of infinitely extended media \cite{Haag96}. Passivity is meant in the sense of the second law: an observer following the flow lines of the Killing field cannot extract energy from passive states by running a cyclic engine.

In Minkowski spacetime the KMS parameter $\beta$ has the meaning of an inverse temperature.\footnote{We use units where $k_{B}=1$.} Recall that according to the zeroth law temperature is a parameter distinguishing equilibrium states which are able to coexist if brought in thermal contact. The second law then entails a canonical choice, the ``absolute'' temperature $T$, given by $T=1/\beta$.

In this letter, we argue that in curved spacetime, $\beta$ is in general not directly related to temperature in the sense of the zeroth law. For in stationary curved spacetimes, tidal forces and the Unruh effect impede this interpretation of the KMS parameter. One can, however, ask whether KMS states in curved stationary spacetimes are locally thermal and, if so, how $\beta$ is related to the \emph{local} temperature.

It was argued in \cite{BuchSchl07,SchlVerc08,Solv10} that, in curved spacetime, local concepts may replace the global ones involved in the definition of equilibrium. In locally covariant QFT \cite{BrunFredVerc03}, one can use point-like localized observables which are sensitive to thermal properties of global equilibrium states in Minkowski spacetime in order to attach local thermal parameters like a temperature to ``local thermal equilibrium'' (LTE) states in curved spacetime. In this approach, LTE states are distinguished by the fact that they respond to the point-like ``thermal observables'' in the same way as global equilibrium states. This notion of local equilibrium leads to a generalization of the Unruh effect \cite{BuchSchl07}, opens new perspectives on the problem of the ``arrow of time'' \cite{Buch03} and could also be of relevance in cosmology \cite{Verch11}.

\section{LTE states for the scalar field}

For the definition of LTE states, recall that a locally covariant quantum field $\varphi$ is defined on all suitable spacetimes $\st{M}$ simultaneously.\footnote{Here, the notation $\st{M}$ subsumes the relevant geometric structure of the spacetime, i.e.\ a metric, an orientation and a time orientation. We use metric signature $(-,+,+,+)$.} For each $\st{M}$,
\begin{displaymath}
 \varphi_{{}_{\st{M}}}(x),~~~~~~~x\in\st{M},
\end{displaymath}
is a quantum field on $\st{M}$ which transforms covariantly under isometric embeddings. On each $\st{M}$, the set of states consists of expectation functionals on the polynomials of the fields.

Let us assume for simplicity that in Minkowski spacetime $\st{Mink}$, for each Lorentz system the global equilibrium state, denoted by $\omega_{\beta}$, is unique for given $\beta>0$. The $\omega_{\beta}$ are then homogeneous. According to \cite{BuchSchl07}, given a set $\mathbf{T}$ of thermal observables, a state $\omega$ is ``$\mathbf{T}$-thermal at $x\in\st{M}$'' (or ``$\mathbf{T}(x)$-thermal''), if there exists a suitable normalized measure $\rho_{x}$ such that
\begin{equation}\label{LTE}
\omega(\phi_{{}_{\st{M}}}(x))=\int\,d\rho_{x}(\beta)\,\omega_{\beta}(\phi_{{}_{\st{Mink}}}(0))
\end{equation}
for all $\phi\in\mathbf{T}$. In other words, a state is $\mathbf{T}$-thermal at $x$ if it responds like a (mixture of) global equilibrium state(s) when tested with fields in $\mathbf{T}$ at $x$. It follows that the expectation value of the macroscopic observable $\Phi$,
\begin{equation}\label{LTE2}
\Phi(\beta):=\omega_{\beta}(\phi_{{}_{\st{Mink}}}(0)),
\end{equation}
can consistently be defined in the LTE state $\omega$. Concretely, the expectation value of $\Phi$ in $\omega$ at spacetime point $x$ is defined as
\begin{equation}\label{ExpThermFunct}
\omega(\Phi)(x):=\omega(\phi_{{}_{\st{M}}}(x))\stackrel{\eqref{LTE}}{=}\int\,d\rho_{x}(\beta)\,\Phi(\beta).
\end{equation}
See \cite{BuchOjimRoos02,BuchSchl07} and \cite{Solv10} for further discussion.

We discuss here the simple example of the massless, conformally coupled free scalar field $\phi$ with equation of motion
\begin{equation}\label{eom}
P\phi_{{}_{\st{M}}}:=\left(-\Box_{{}_{\st{M}}}+\frac{1}{6}R_{{}_{\st{M}}}\right)\phi_{{}_{\st{M}}}=0,
\end{equation}
where $\Box_{{}_{\st{M}}}$ and $R_{{}_{\st{M}}}$ denote the D'Alembertian and the Ricci scalar on $\st{M}$ respectively. The canonical commutation relations for the basic field are given by
\begin{displaymath}
 [\phi_{{}_{\st{M}}}(x),\phi_{{}_{\st{M}}}(y)]=i\Delta_{{}_{\st{M}}}(x,y)\mathbbm{1}
\end{displaymath}
with the commutator function $\Delta_{{}_{\st{M}}}$, given by the difference of the retarded and advanced fundamental solutions of \eqref{eom}. We restrict attention here to the family of Hadamard states, see e.g.\ \cite{Sand10} and references therein.

The thermal observables for the free field are chosen among the locally covariant Wick polynomials, which are unique up to renormalization constants \cite{HollWald01}. The choice of these numbers is therefore an important part of the definition of LTE. For example, the locally covariant Wick square is defined by point splitting, i.e.\
\begin{equation}\label{WickConstant}
\wick{\phi^2}_{{}_{\st{M}}}(x):=\lim_{y\rightarrow x}\,\Bigl(\phi_{{}_{\st{M}}}(x)\phi_{{}_{\st{M}}}(y)-h(x,y)\Bigr)+\alpha\, R_{{}_{\st{M}}}(x),
\end{equation}
where $h$ is the Hadamard parametrix, defined in a neighbourhood of $(x,x)$, and $\alpha\in\mathbb{R}$ is an a priori undetermined renormalization constant. The coincidence limit on the right hand side is the Hadamard normal ordered Wick square, denoted by $\wick{\phi^2}_{h}(x)$.\footnote{Higher Wick powers and their derivatives are defined analogously and exhibit similar renormalization ambiguities, an important example being the stress energy tensor.} In global equilibrium states in Minkowski spacetime,
\begin{equation}\label{TempMink}
 \omega_{\beta}(\wick{\phi^2}_{{}_{\st{Mink}}}(x))=\frac{1}{12\beta^{2}}.
\end{equation}
In view of relation \eqref{LTE2}, $\wick{\phi^2}$ may thus be used as a local thermometer in arbitrary spacetimes and any state with $\omega(\wick{\phi^2}_{{}_{\st{M}}}(x))\geq 0$ is locally thermal at $x$ with regard to this observable. According to equations \eqref{ExpThermFunct} and \eqref{TempMink}, the mean of the square of the local temperature in this state is given by
\begin{equation}\label{LocalTemp}
\overline{T^2(x)}:=\int\,d\rho_{x}(\beta)\frac{1}{\beta^{2}}=12\,\omega(\wick{\phi^2}_{{}_{\st{M}}}(x))
\end{equation}
for some measure $\rho_{x}$.

Usually, there exist many measures such that equation \eqref{LocalTemp} holds, so one needs further thermal observables in order to determine the distribution of the thermal functions. Here, we sketch a way to decide whether a given LTE state $\omega$ has a sharp local temperature $T(x)$, i.e.\ if $\rho_{x}$ is a Dirac measure. Moreover, we show that there are always states of sharp local temperature in compact regions of any curved spacetime. To do so, we choose the set of thermal observables as
\begin{equation}
 \mathbf{T}_{2n}:=\{\mathbbm{1},\wick{\phi^2},\dots,\wick{\phi^{2n}}\}
\end{equation}
for some integer $n\geq0$. According to \cite[Theorem 5.1]{HollWald01}, the renormalization freedom for the Wick powers is given by
\begin{equation}\label{RenormWick}
\wick{\phi^{n}}_{{}_{\st{M}}}(x)=\wick{\phi^{n}}_{h}(x)+\sum^{n-2}_{k=0}\binom{n}{k}C_{n-k}(x)\,\wick{\phi^{k}}_{h}(x).
\end{equation}
Here, the $C_{i}(x)$ are polynomials with real coefficients in the metric and curvature with suitable scaling behaviour, e.g.\ $C_{2}=\alpha R_{{}_{\st{M}}}$, see \eqref{WickConstant}.

Recall that a state $\omega$ is called quasi-free if the corresponding $n$-point functions vanish for odd $n$ and if
\begin{align*}
 \omega(\phi_{{}_{\st{M}}}(x_{1}),\dots,\phi_{{}_{\st{M}}}(x_{2n}))=\sum_{\pi\in\Pi_{n}}&\omega(\phi_{{}_{\st{M}}}(x_{\pi(1)}),\phi_{{}_{\st{M}}}(x_{\pi(2)}))\dots\\
 \dots\,&\omega(\phi_{{}_{\st{M}}}(x_{\pi(2n-1)}),\phi_{{}_{\st{M}}}(x_{\pi(2n)})),
\end{align*}
where $\Pi_{n}$ is the set of permutations of $\{1,\dots,2n\}$ with $\pi(1)<\pi(3)<\dots<\pi(2n-1)$ and $\pi(2i-1)<\pi(2i)$, $i=1,\dots,n$. In quasi-free states one has
\begin{equation}\label{CombQuasifree}
 \omega\bigl(\wick{\phi^{2n}}_{h}(x)\bigr)=(2n-1)!!\,\omega\bigl(\wick{\phi^{2}}_{h}(x)\bigr)^n
\end{equation}
for the even Hadamard normal ordered Wick powers, while the odd ones have vanishing expectation values. Since the global equilibrium states are quasi-free and the $C_{i}$ are zero in Minkowski spacetime, one finds
\begin{equation}\label{WickThermFunct}
 \omega_{\beta}\bigl(\wick{\phi^{2n}}_{{}_{\st{Mink}}}(0)\bigr)=(2n-1)!!\,\omega_{\beta}\bigl(\wick{\phi^{2}}_{{}_{\st{Mink}}}(0)\bigr)^n=\frac{(2n-1)!!}{12^n\,\beta^{2n}}.
\end{equation}
\begin{proposition}\label{QuasiMink}
 A quasi-free state in Minkowski spacetime is $\mathbf{T}_{2n}(x)$-thermal for all $n$ if and only if it is $\mathbf{T}_{2}(x)$-thermal. As a consequence, a $\mathbf{T}_{2}(x)$-thermal quasi-free state in Minkowski spacetime has a sharp local temperature.
\end{proposition}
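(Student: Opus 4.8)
The plan is to reduce the entire hierarchy of LTE conditions to a single sequence of moment equations for the measure $\rho_x$, exploiting the quasi-free factorization \eqref{CombQuasifree}. The implication ``$\mathbf{T}_{2n}(x)$-thermal for all $n$'' $\Rightarrow$ ``$\mathbf{T}_2(x)$-thermal'' is immediate, since $\mathbf{T}_2\subset\mathbf{T}_{2n}$; the content lies in the converse and in the concluding sharpness assertion. Since the renormalization polynomials $C_i$ vanish in Minkowski spacetime, the locally covariant Wick powers coincide there with the Hadamard normal ordered ones, so for a quasi-free state $\omega$ the factorization \eqref{CombQuasifree} gives $\omega(\wick{\phi^{2n}}_{{}_{\st{Mink}}}(x))=(2n-1)!!\,u^{n}$ with $u:=\omega(\wick{\phi^{2}}_{{}_{\st{Mink}}}(x))$, while \eqref{WickThermFunct} gives $\omega_{\beta}(\wick{\phi^{2n}}_{{}_{\st{Mink}}}(0))=(2n-1)!!\,(12\beta^{2})^{-n}$. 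Inserting both into \eqref{LTE} for $\phi=\wick{\phi^{2n}}$ and cancelling $(2n-1)!!$, the requirement that $\omega$ be $\mathbf{T}_{2n}(x)$-thermal for every $n$ via a normalized measure $\rho_x$ becomes the sequence $\int\beta^{-2n}\,d\rho_x(\beta)=\bigl(\int\beta^{-2}\,d\rho_x(\beta)\bigr)^{n}$, $n\geq1$, whose $n=1$ member is precisely $12u=\int\beta^{-2}\,d\rho_x$, i.e.\ $\mathbf{T}_2(x)$-thermality.

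Assume now $\mathbf{T}_2(x)$-thermality; then $u\geq0$ because $\beta^{-2}\geq0$. If $u>0$ I would simply exhibit the Dirac measure $\tilde\rho_x:=\delta_{\beta_0}$ with $\beta_0:=(12u)^{-1/2}$ and verify that it satisfies all the moment equations, since $\int\beta^{-2n}\,d\tilde\rho_x=\beta_0^{-2n}=(12u)^{n}$; hence $\omega$ is $\mathbf{T}_{2n}(x)$-thermal for all $n$, which proves the equivalence. The degenerate case $u=0$ is covered by the vacuum, i.e.\ by admitting $\beta=\infty$ in the support of $\rho_x$ with the convention $\beta^{-1}|_{\beta=\infty}=0$, so that every Wick power vanishes on both sides of \eqref{LTE}. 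For the concluding assertion, the $n=2$ moment equation, $\int\beta^{-4}\,d\rho_x=\bigl(\int\beta^{-2}\,d\rho_x\bigr)^{2}$, is the equality case of the Cauchy--Schwarz (or Jensen) inequality and therefore forces $\beta^{-2}$ to be $\rho_x$-almost everywhere constant; thus any measure realizing $\mathbf{T}_{2n}(x)$-thermality for all $n$ is the Dirac measure $\delta_{\beta_0}$, and $\omega$ has the sharp local temperature $T(x)=(12\,\omega(\wick{\phi^{2}}_{{}_{\st{Mink}}}(x)))^{1/2}$.

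The argument is essentially soft once \eqref{CombQuasifree} and \eqref{WickThermFunct} are available, so I do not expect a serious obstacle; the points requiring care are the appeal to the vanishing of the $C_i$ in Minkowski, which is what lets \eqref{CombQuasifree} be applied to the very Wick powers entering the LTE condition, and the treatment of the degenerate case $u=0$. The genuinely instructive step is the collapse of the full LTE hierarchy onto the identity $\langle\beta^{-2}\rangle^{n}=\langle\beta^{-2n}\rangle$ for the image of $\rho_x$ under $\beta\mapsto\beta^{-2}$, whose $n=2$ instance alone already pins this distribution to zero variance.
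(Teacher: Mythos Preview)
Your proof is correct and follows essentially the same route as the paper: the equivalence is obtained by exhibiting the Dirac measure $\delta_{\beta_0}$ (which the paper leaves implicit in its terse appeal to \eqref{WickThermFunct}), and the sharpness is deduced from the equality case of Cauchy--Schwarz for $1$ and $\beta^{-2}$ at $n=2$, together with the monotonicity of $\beta\mapsto\beta^{-2}$. Your treatment is in fact more explicit than the paper's, particularly in spelling out the role of the vanishing $C_i$ in Minkowski spacetime and in handling the degenerate case $u=0$.
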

\begin{proof}
The first observation is a direct consequence of relation \eqref{WickThermFunct}. The second statement can be justified as follows:\ in a quasi-free state
\begin{equation}
 \omega(\wick{\phi^4}_{{}_{\st{Mink}}}(x))=3\,\omega(\wick{\phi^2}_{{}_{\st{Mink}}}(x))^2.
\end{equation}
But the $\mathbf{T}_{2}(x)$-thermal state $\omega$ is also $\mathbf{T}_{4}(x)$-thermal and thus there exists some measure $\rho_{x}$ such that
\begin{equation*}
 \int \frac{d\rho_{x}(\beta)}{(\beta^2)^2}=\left(\int \frac{d\rho_{x}(\beta)}{\beta^2}\right)^2.
\end{equation*}
This equation implies that the Cauchy-Schwarz inequality for the functions $1$ and $1/\beta^2$ is a sharp equality on $L^2(\mathbb{R}^{+},\rho_{x})$, i.e.\ $1/\beta^2=\text{const.}\cdot 1$ almost everywhere with respect to the measure $\rho_{x}$. By the monotonicity of $\beta\mapsto 1/\beta^2$ this means that $\rho_{x}$ is supported at a single point - it must be a Dirac measure.
\end{proof}
This conclusion rests on the combinatorics of Hadamard normal ordered Wick powers in quasi-free states. However, for the full Wick powers in curved spacetime (including the renormalization freedom) we have
\begin{equation}
 \omega\bigl(\wick{\phi^{2n}}_{{}_{\st{M}}}(x)\bigr) = \omega\bigl(\wick{\phi^{2n}}_{h}(x)\bigr)+ \sum^{2n}_{k=1}\binom{2n}{k}\,C_{2n-k}(x)\,\omega\bigl(\wick{\phi^{k}}_{h}(x)\bigr),
\end{equation}
according to equation \eqref{RenormWick}. Imposing relation \eqref{CombQuasifree} for the full locally covariant Wick powers up to order $2n$ therefore greatly reduces the renormalization freedom.
\begin{proposition}\label{PropQuasifreeComb}
Let $\omega$ be a quasi-free state on some spacetime $\st{M}$. Then
\begin{equation}\label{CombCurved}
 \omega\bigl(\wick{\phi^{2n}}_{{}_{\st{M}}}(x)\bigr)=(2n-1)!!\,\omega\bigl(\wick{\phi^{2}}_{{}_{\st{M}}}(x)\bigr)^n
\end{equation}
if and only if
\begin{equation}\label{CondRenorm}
 C_{2m}=(2m-1)!!\,C_{2}{}^{m}~~~\text{for all}~m=1,2,\dots,n.
\end{equation}
\end{proposition}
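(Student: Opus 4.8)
The plan is to turn \eqref{CombCurved} into a polynomial identity in the single real quantity $w:=\omega(\wick{\phi^2}_{h}(x))$ and then to compare coefficients. First I would substitute the renormalization formula \eqref{RenormWick}, with $n$ replaced by $2n$, into $\omega(\wick{\phi^{2n}}_{\st{M}}(x))$. Since $\omega$ is quasi-free the odd Hadamard normal ordered Wick powers drop out, and the even ones are fixed by \eqref{CombQuasifree}, so that
\begin{equation*}
 \omega\bigl(\wick{\phi^{2n}}_{\st{M}}(x)\bigr)=(2n-1)!!\,w^{n}+\sum_{j=0}^{n-1}\binom{2n}{2j}(2j-1)!!\,C_{2n-2j}(x)\,w^{j}.
\end{equation*}
In the same way $\omega(\wick{\phi^{2}}_{\st{M}}(x))=w+C_{2}(x)$, so the right-hand side of \eqref{CombCurved} equals $(2n-1)!!\,(w+C_{2}(x))^{n}=(2n-1)!!\sum_{j=0}^{n}\binom{n}{j}C_{2}(x)^{n-j}\,w^{j}$. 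Thus \eqref{CombCurved} becomes the assertion that these two explicit polynomials in $w$ coincide.

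For the ``only if'' direction I would then use that, at the fixed point $x$, the number $w$ is not constrained by the quasi-free Hadamard requirements: the difference of the two-point functions of two such states is smooth, so $w$ takes infinitely many values (in fact sweeps a non-degenerate interval) as $\omega$ ranges over the quasi-free Hadamard states, while the renormalization polynomials $C_{i}(x)$ are state-independent. Hence the two polynomials above must agree coefficient by coefficient, i.e.
\begin{equation*}
 \binom{2n}{2j}(2j-1)!!\,C_{2n-2j}(x)=(2n-1)!!\binom{n}{j}C_{2}(x)^{n-j},\qquad j=0,1,\dots,n.
\end{equation*}
Writing $m=n-j$ and invoking the elementary identity $(2n-1)!!\binom{n}{m}=(2m-1)!!\,(2n-2m-1)!!\binom{2n}{2m}$, which follows at once from $(2k-1)!!=(2k)!/(2^{k}k!)$, the $j$-th equation collapses to $C_{2m}(x)=(2m-1)!!\,C_{2}(x)^{m}$. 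The cases $m=0,1$ are automatic (with the conventions $C_{0}=1$, $(-1)!!=1$), and $m=2,\dots,n$ reproduce exactly \eqref{CondRenorm}.

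The ``if'' direction is obtained by reading the same computation backwards: inserting \eqref{CondRenorm} into the displayed expansion of $\omega(\wick{\phi^{2n}}_{\st{M}}(x))$ and applying the binomial theorem reassembles $(2n-1)!!\,(w+C_{2}(x))^{n}=(2n-1)!!\,\omega(\wick{\phi^{2}}_{\st{M}}(x))^{n}$. I expect the only genuinely delicate point to be the freeness of $w$ used in the ``only if'' part, i.e.\ the (standard but slightly technical) fact that quasi-free Hadamard states realize a continuum of values of $\omega(\wick{\phi^2}_{h}(x))$ at a fixed point; equivalently, one must read \eqref{CombCurved} as a requirement on the whole folium of quasi-free states rather than on one of them. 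Once this is granted, the remainder is just the double-factorial bookkeeping sketched above.
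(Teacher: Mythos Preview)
Your argument is correct and is precisely the ``straightforward combinatorial argument'' the paper alludes to; the paper does not spell out the details here but defers them to \cite{Solv12}. Your identification of the one genuinely non-formal point---that \eqref{CombCurved} must be read as a condition on \emph{all} quasi-free (Hadamard) states so that $w=\omega(\wick{\phi^{2}}_{h}(x))$ ranges over enough values to force equality of coefficients---is apt, and the double-factorial identity you use is exactly what is needed.
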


\begin{proof}
The straightforward combinatorial argument can be found in \cite{Solv12}.
\end{proof}

Note that condition \eqref{CondRenorm} is non-trivial. For example $C_{4}$ is of the form
\begin{displaymath}
 C_{4}=\gamma_{0}R^2+\gamma_{1}\Box R+\gamma_{2} R_{ab}R^{ab}+\gamma_{3}R_{abcd}R^{abcd}
\end{displaymath}
with real constants $\gamma_{i}$. Remembering that $C_{2}=\alpha\,R$, the proposition states that if \eqref{CombCurved} is to hold for quasi-free states in arbitrary spacetimes $\st{M}$, then $\gamma_{1}=\gamma_{2}=\gamma_{3}=0$ and $\gamma_{0}=3\alpha^2$. More generally, \eqref{CondRenorm} means
\begin{equation}\label{CR}
 C_{2m}=(2m-1)!!\,\alpha^{m}\,R^{m}~~\text{for all}~~m=1,2,\dots,n.
\end{equation}
Once the Wick square is fixed, condition \eqref{CombCurved} therefore completely fixes the even Wick powers up to order $2n$.
\begin{corollary}\label{CorTherm}
If \eqref{CR} holds for all $m=1,2,\dots,n$, then a quasi-free state is $\mathbf{T}_{2n}(x)$-thermal if and only if it is $\mathbf{T}_{2}(x)$-thermal. If $n\geq 2$, then the state has a sharp local temperature.
\end{corollary}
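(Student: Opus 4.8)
The plan is to obtain the corollary as an immediate consequence of Proposition~\ref{PropQuasifreeComb} combined with the elementary Cauchy--Schwarz argument already used in the proof of Proposition~\ref{QuasiMink}; essentially no new idea is needed, only some bookkeeping. The ``only if'' direction is trivial: since $\mathbf{T}_2\subseteq\mathbf{T}_{2n}$, any normalised measure realising \eqref{LTE} for all $\phi\in\mathbf{T}_{2n}$ realises it in particular for all $\phi\in\mathbf{T}_2$, so a $\mathbf{T}_{2n}(x)$-thermal state is $\mathbf{T}_2(x)$-thermal.

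For the converse, let $\omega$ be quasi-free and $\mathbf{T}_2(x)$-thermal, witnessed by some normalised measure $\rho_x$, and set $a:=\omega(\wick{\phi^2}_{{}_{\st{M}}}(x))=\int d\rho_x(\beta)\,\tfrac{1}{12\beta^2}\geq 0$ by \eqref{TempMink}. Since \eqref{CR}, equivalently \eqref{CondRenorm}, holds for $m=1,\dots,n$, Proposition~\ref{PropQuasifreeComb} yields $\omega(\wick{\phi^{2m}}_{{}_{\st{M}}}(x))=(2m-1)!!\,a^m$ for all $m\leq n$. I would then take the Dirac measure $\rho_x'=\delta_{\beta_0}$ with $\beta_0:=(12a)^{-1/2}$ and verify, using \eqref{WickThermFunct}, that
\begin{displaymath}
  \int d\rho_x'(\beta)\,\omega_\beta(\wick{\phi^{2m}}_{{}_{\st{Mink}}}(0))=\frac{(2m-1)!!}{(12\beta_0^2)^m}=(2m-1)!!\,a^m=\omega(\wick{\phi^{2m}}_{{}_{\st{M}}}(x))
\end{displaymath}
for all $m\leq n$, that $\rho_x'$ has total mass $\omega(\mathbbm{1})=1$, and that the odd Wick powers have vanishing expectation on both sides; the latter uses that the odd full Wick powers still have vanishing expectation in the quasi-free state $\omega$, because by their scaling dimension the renormalisation constants $C_i$ in \eqref{RenormWick} vanish for odd $i$. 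Hence $\rho_x'$ realises \eqref{LTE} for every $\phi\in\mathbf{T}_{2n}$, so $\omega$ is $\mathbf{T}_{2n}(x)$-thermal. (The degenerate case $a=0$ corresponds to the vacuum restricted to $\mathbf{T}_{2n}$ and is handled trivially by a limiting measure.)

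For the sharpness statement, assume $n\geq2$ and let $\tilde\rho_x$ be \emph{any} normalised measure witnessing $\mathbf{T}_{2n}(x)$-thermality. Since $\wick{\phi^4}\in\mathbf{T}_{2n}$, it satisfies $\omega(\wick{\phi^2}_{{}_{\st{M}}}(x))=\int\tfrac{d\tilde\rho_x(\beta)}{12\beta^2}$ and $\omega(\wick{\phi^4}_{{}_{\st{M}}}(x))=\int\tfrac{3\,d\tilde\rho_x(\beta)}{144\beta^4}$, while \eqref{CR} for $m=1,2$ together with Proposition~\ref{PropQuasifreeComb} gives $\omega(\wick{\phi^4}_{{}_{\st{M}}}(x))=3\,\omega(\wick{\phi^2}_{{}_{\st{M}}}(x))^2$. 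Eliminating $\omega$ leaves $\int\tfrac{d\tilde\rho_x(\beta)}{\beta^4}=\bigl(\int\tfrac{d\tilde\rho_x(\beta)}{\beta^2}\bigr)^2$, which is precisely the equality case of the Cauchy--Schwarz inequality for the functions $1$ and $1/\beta^2$ on $L^2(\mathbb{R}^+,\tilde\rho_x)$; as in the proof of Proposition~\ref{QuasiMink}, this forces $\tilde\rho_x$ to be a Dirac measure. Thus every witnessing measure equals $\delta_{\beta_0}$ and $\omega$ has sharp local temperature $T(x)=1/\beta_0$.

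I do not expect a real obstacle here: all the work sits in Proposition~\ref{PropQuasifreeComb}, which transports the Minkowski combinatorics \eqref{WickThermFunct} to curved spacetime once the renormalisation constants are fixed by \eqref{CR}. The only points demanding a little care are the logical difference between exhibiting one Dirac measure that works (which gives the ``if'' direction) and showing that \emph{every} admissible measure is Dirac (which gives the sharpness), together with the harmless treatment of the odd Wick powers and of the degenerate case $\omega(\wick{\phi^2}_{{}_{\st{M}}}(x))=0$.
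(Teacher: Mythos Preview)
Your argument is correct and is exactly the one the paper intends: the corollary is stated without proof because it is meant to follow immediately from Proposition~\ref{PropQuasifreeComb} (transporting the combinatorics \eqref{CombQuasifree} to the full Wick powers under \eqref{CR}) together with the Cauchy--Schwarz step from the proof of Proposition~\ref{QuasiMink}. Your additional bookkeeping (explicitly exhibiting the Dirac measure $\delta_{\beta_0}$, distinguishing existence of a witnessing measure from uniqueness, and treating the odd powers and the boundary case $a=0$) is sound but goes slightly beyond what the paper spells out; note in particular that $\mathbf{T}_{2n}$ as defined contains only even Wick powers, so the remark on odd $C_i$ is not strictly needed.
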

Adopting \eqref{CR} for some $n\geq 2$ as our renormalization prescription, we conclude that all quasi-free LTE states necessarily have a sharp local temperature. The KMS states in the examples discussed below are all of this type.

\begin{proposition}
Let $\mathcal{O}$ be a compact subset of some spacetime $\st{M}$ and let $n\in\mathbb{N}$. If \eqref{CR} holds for all $m=1,2,\dots,n$, then there exist (not necessarily quasi-free) $\mathbf{T}_{2n}(\mathcal{O})$-thermal states.
\end{proposition}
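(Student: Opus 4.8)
The plan is to stay within the class of quasi-free Hadamard states and use Corollary~\ref{CorTherm} to reduce everything to a single positivity requirement. Since \eqref{CR} is assumed for $m=1,\dots,n$, that corollary says a quasi-free state is $\mathbf{T}_{2n}(x)$-thermal as soon as it is $\mathbf{T}_2(x)$-thermal, and by Proposition~\ref{PropQuasifreeComb} together with \eqref{WickThermFunct} the latter holds precisely when $\omega(\wick{\phi^2}_{\st{M}}(x))>0$: one then takes $\rho_x$ to be the Dirac measure at $\beta(x):=\bigl(12\,\omega(\wick{\phi^2}_{\st{M}}(x))\bigr)^{-1/2}$, the normalisation $\int d\rho_x=1=\omega(\mathbbm{1})$ is trivial, and the higher even moments $\omega(\wick{\phi^{2k}}_{\st{M}}(x))=(2k-1)!!\,\omega(\wick{\phi^2}_{\st{M}}(x))^k$ match $\int d\rho_x(\beta)\,\omega_\beta(\wick{\phi^{2k}}_{\st{Mink}}(0))$ automatically. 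So it suffices to produce a quasi-free Hadamard state $\omega$ on $\st{M}$ with $x\mapsto\omega(\wick{\phi^2}_{\st{M}}(x))$ bounded away from $0$ on the compact set $\mathcal{O}$.

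I would obtain such a state by a smooth perturbation of an arbitrary reference state. Take any quasi-free Hadamard state $\omega_0$ on the (globally hyperbolic) spacetime $\st{M}$, with two-point function $w_0$; such states exist by the standard construction \cite{Sand10}. The function $f_0(x):=\omega_0(\wick{\phi^2}_{\st{M}}(x))$ is smooth, being the coincidence limit of the smooth remainder $w_0-h$ together with $\alpha R_{\st{M}}$ from \eqref{WickConstant}, hence $f_0\geq -C$ on $\mathcal{O}$ for some $C\geq 0$. For each real smooth solution $u$ of $Pu=0$ put $s_u(f,g):=\bigl(\int f u\bigr)\bigl(\int g u\bigr)$; this is a smooth bidistribution, symmetric (so it leaves the commutator, hence the CCR, unchanged), of positive type, and a bisolution in both arguments, so for any $\lambda_j>0$ and real solutions $u_j$ the bidistribution $w:=w_0+\sum_j\lambda_j s_{u_j}$ is again the two-point function of a quasi-free state, still Hadamard because the perturbation is smooth and does not change the wavefront set. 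A short computation then gives $\omega(\wick{\phi^2}_{\st{M}}(x))=f_0(x)+\sum_j\lambda_j\,u_j(x)^2$.

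It remains to choose the solutions so that their squares have no common zero on $\mathcal{O}$. For each $x\in\mathcal{O}$ there is a smooth solution of $Pu=0$ with $u(x)\neq 0$ — solve the Cauchy problem with compactly supported data that is nonzero at $x$ on a Cauchy surface through $x$ — and by continuity $u$ is nonzero on a whole neighbourhood of $x$; compactness of $\mathcal{O}$ then yields finitely many solutions $u_1,\dots,u_N$ with $\delta:=\min_{x\in\mathcal{O}}\sum_j u_j(x)^2>0$. Choosing $\lambda_j\equiv(C+1)/\delta$ gives $\omega(\wick{\phi^2}_{\st{M}}(x))\geq 1$ for all $x\in\mathcal{O}$, so the quasi-free Hadamard state $\omega$ with two-point function $w$ is $\mathbf{T}_{2n}(x)$-thermal at every $x\in\mathcal{O}$, with the smoothly (hence measurably) varying Dirac measures $\rho_x=\delta_{\beta(x)}$ above; that is exactly the assertion that $\omega$ is $\mathbf{T}_{2n}(\mathcal{O})$-thermal.

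The argument is soft, so I do not anticipate a genuine obstacle. The points that need care are purely bookkeeping: checking that the rank-one perturbations $s_u$ preserve all three defining properties of a quasi-free Hadamard two-point function (symmetry for the CCR, positive type for positivity, smoothness for the Hadamard wavefront set condition), and the compactness step that produces the finite family $\{u_j\}$. States that are not quasi-free can be reached by adding further smooth positive-type contributions, but they are not needed for the existence statement.
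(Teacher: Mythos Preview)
Your argument is correct and rests on the same core idea as the paper: invoke Corollary~\ref{CorTherm} to reduce $\mathbf{T}_{2n}$-thermality to positivity of the Wick square, then achieve that positivity on $\mathcal{O}$ by perturbing a reference two-point function with smooth rank-one bisolutions $u\otimes u$ built from real solutions of $Pu=0$. The paper uses a \emph{single} such solution (the one-point function of a Hadamard state with nontrivial one-point function, or, as a footnote remarks, any smooth solution of $P$) scaled by one constant $c$; your compactness-cover argument with finitely many $u_j$ is a little more careful, since a single real solution of the wave equation can have zeros inside $\mathcal{O}$, at which no amount of scaling helps. The paper's proof also closes with a step you only mention in passing: taking convex combinations of the quasi-free states $\tilde\omega_c$ for different values of $c$ yields genuinely non-quasi-free $\mathbf{T}_{2n}(\mathcal{O})$-thermal states, which is what the parenthetical ``(not necessarily quasi-free)'' in the statement is meant to highlight.
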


\begin{proof}
A state is $\mathbf{T}_{2}(x)$-thermal if the corresponding expectation value of $\wick{\phi^2}_{{}_{\st{M}}}(x)$ is positive. We can construct a state with this property for all $x\in\mathcal{O}$ as follows. Pick any state $\omega$ with non-vanishing one-point function $\omega^{(1)}$ and modify its two-point function $\omega^{(2)}$ by adding a term
\begin{equation*}
 \tilde{\omega}^{(2)}_{c}:=\omega^{(2)}+c~\omega^{(1)}\otimes\omega^{(1)}
\end{equation*}
with some $c\geq0$.\footnote{In fact, instead of $\omega^{(1)}$ we could take any smooth solution of $P$.} Since $\omega$ is Hadamard, $\omega^{(1)}$ is smooth. $\tilde{\omega}^{(2)}_{c}$ is a valid two-point function and we can consider the corresponding quasi-free state $\tilde{\omega}_{c}$.\footnote{The fact that a positive bi-distribution gives rise to a quasi-free state is proven in \cite{Petz90} at the level of Weyl algebras.}
Clearly
\begin{equation*}
 \tilde{\omega}_{c}(\wick{\phi^2}_{{}_{\st{M}}}(x))=\omega(\wick{\phi^2}_{{}_{\st{M}}}(x))+c\,(\omega^{(1)}(x))^2.
\end{equation*}
Due to the compactness of $\mathcal{O}$, we may choose the constant $c$ such that $\tilde{\omega}_{c}(\wick{\phi^2}_{{}_{\st{M}}}(x))$ is positive for all $x\in\mathcal{O}$. The corresponding states $\tilde{\omega}_{c}$ are consequently $\mathbf{T}_{2}(\mathcal{O})$-thermal. By Corollary \ref{CorTherm} this means that these states, being quasi-free, are also $\mathbf{T}_{2n}(\mathcal{O})$-thermal. Now we use the fact that $\mathbf{T}_{2n}(\mathcal{O})$-thermality is stable under convex combinations. We simply mix the so-constructed quasi-free states and thus find states which are not quasi-free but still $\mathbf{T}_{2n}(\mathcal{O})$-thermal.
\end{proof}
This result shows that even if curved spacetime KMS states fail to be in local equilibrium, there are still many LTE states (even with sharp local temperature) in any compact region of curved spacetime. This can be seen as a consistency check on our definition of LTE states.

\section{KMS states and renormalization freedom}

From equations \eqref{WickConstant} and \eqref{LocalTemp} it is clear that $\mathbf{T}_{2n}$-thermality and the value of the local temperature depend on the choice of the renormalization constant $\alpha$. Following an idea by Buchholz and Schlemmer \cite{BuchSchl07}, we fix $\alpha$ based on the investigation of KMS states with respect to Killing flows in Anti de Sitter, Rindler, static de Sitter and Schwarzschild spacetimes. We also explain in which sense our proposal differs from the one presented in \cite{BuchSchl07}. Moreover, with our choice, there is an interesting relation between the KMS parameter $\beta$ and the local temperature, generalizing earlier results on the Hawking Unruh temperature in \cite{NarPetThi96}, \cite{DesLev97} and \cite{Jac98}.

Anti de Sitter spacetime $\st{AdS}$ is the maximally symmetric spacetime with constant negative curvature. It can be represented as the universal cover of
\begin{equation}
 \{x\in\mathbb{R}^5~|~-(x^{0})^2-(x^{1})^2+(x^{2})^2+(x^{3})^2+(x^{4})^2=-\rho^2\}
\end{equation}
with radius $\rho$. We assume that the free scalar quantum field and its Wick powers have been constructed with ``transparent'' boundary conditions for the Green's functions of the field, see \cite{AviIshSto78}.

The trajectories of the Killing field $K:=x^{2}\partial_{0}-x^{0}\partial_{2}$ are positive time-like only in the interior of the wedge
 \begin{displaymath}
 W^{+}:=\{ x\in\st{AdS} ~|~ |x_{0}|< x_{2} \}.
\end{displaymath}
For each $\beta>0$, an observer following the flow lines of $K$ can prepare quasi-free KMS states $\omega_{\beta}$ in the interior of $W^{+}$. For the Wick square, one finds\footnote{This result is new. However, we do not put the calculation on record here, because there are no new ideas involved. The methods used are those developed by Stottmeister in \cite{Stot09}.}
\begin{equation}\label{WickAdS}
 \omega_{\beta}(\wick{\phi^2}_{{}_{\st{AdS}}}(x))=\frac{1}{-\langle K,K\rangle_{x}}\left(\frac{1}{12\beta^2}-\frac{1}{12(2\pi)^2}\right)+\frac{1}{24\pi^2\rho^2}-\frac{12}{\rho^2}\alpha.
\end{equation}
The Hawking-Unruh type state with $\beta=2\pi$ is distinguished by the fact that it may be extended to all of $\st{AdS}$. Moreover, it is a ground state with respect to the global dynamics induced by the time-like Killing field $x^{0}\partial_{1}+x^{1}\partial_{0}$, see \cite{BuchFlorSumm00}. We now propose to fix the renormalization constant as
\begin{equation}\label{choice}
 \alpha=\frac{1}{288\pi^2}.
\end{equation}
With this choice, \eqref{WickAdS} becomes
\begin{equation}\label{WickAdS2}
 \omega_{\beta}(\wick{\phi^2}_{{}_{\st{AdS}}}(x))=\frac{1}{-\langle K,K\rangle_{x}}\left(\frac{1}{12\beta^2}-\frac{1}{12(2\pi)^2}\right).
\end{equation}
As a consequence, the Hawking-Unruh type state has a local temperature of zero everywhere. Moreover, KMS states with $\beta>2\pi$ have negative expectation value for the Wick square and are therefore out of local equilibrium.

In terms of the squared four acceleration $a^2(x)$, one has
\begin{equation}
 \frac{1}{-\langle K,K\rangle_{x}}=a^2(x)-\frac{1}{\rho^2}.
\end{equation}
Note that the infimum of $a^2(x)$ along the flow lines of $K$ in $W^{+}$ is given by $\frac{1}{\rho^2}$.

Generally, an observer following the flow lines of a Killing field  $K$ through a point $x$ perceives the KMS parameter to be scaled, because his proper time is scaled with regard to the Killing parameter. Concretely:\footnote{Recall that the quantity $\langle K,K\rangle_{x}$ is constant along the flow lines of $K$.}
\begin{equation}
\beta_{{}_{\mathrm{KMS}}}(x):=\sqrt{-\langle K,K\rangle_{x}}\,\beta.
\end{equation}
In the case of the $\st{AdS}$ dynamics discussed here, one finds that $-\langle K,K\rangle_{x}\rightarrow\infty$ if $x$ tends to spatial infinity.\footnote{Of course, the trajectories are the ones with $x_{2}^2-x_{0}^2=\text{const.}$ and by ``spatial infinity'' we mean this constant tending to $+\infty$.} Hence $1/\beta_{{}_{\mathrm{KMS}}}(x)\rightarrow 0$ at spatial infinity. Therefore, for the Hawking-Unruh type state, $1/\beta_{{}_{\mathrm{KMS}}}(x)$ tends to the value of the local temperature (zero, in this case) for trajectories far away from the origin.

In Rindler spacetime $\st{R}:=\{x\in \mathbb{R}^4~|~~|x^0|<x^1\}$ we find an analogous situation, which however is not influenced by the choice of $\alpha$, since $\st{R}$ is flat. There is a positive time-like Killing vector $K=x_{1}\partial_{0}-x_{0}\partial_{1}$, which generates the Lorentz boosts in the $x^1$-direction. Again, an observer following this flow can prepare quasi-free KMS states for any value $\beta>0$. One finds \cite{Pete07}
\begin{equation}\label{WickRindler}
 \omega_{\beta}(\wick{\phi^2}_{{}_{\st{R}}}(x))=\frac{1}{-\langle K,K\rangle_{x}}\left(\frac{1}{12\beta^2}-\frac{1}{12(2\pi)^2}\right).
\end{equation}
Here, the four acceleration is related to the norm of the Killing field by
\begin{displaymath}
 \frac{1}{-\langle K,K\rangle_{x}}=a^2(x).
\end{displaymath}
The Unruh state with $\beta=2\pi$ has vanishing expectation value, i.e.\ a local temperature of zero and KMS states with $\beta>2\pi$ are not in local equilibrium. As in Anti de Sitter, $-\langle K,K\rangle_{x}\rightarrow\infty$ for $x$ tending towards spatial infinity. Hence for the Unruh state the value of $1/\beta_{{}_{\mathrm{KMS}}}(x)$ tends to the local temperature (namely zero), just like in $\st{AdS}$.

We also consider exterior Schwarzschild spacetime $\st{Schw}$ surrounding a static black hole of mass $M$. It is a vacuum solution\footnote{This implies that there is no renormalization ambiguity for the Wick square, since $R_{{}_{\st{Schw}}}=0$.} to Einstein's equations with positive time-like Killing field $K$. In terms of the usual radial coordinate $r\in(2M,\infty)$,
\begin{equation}
-\langle K,K\rangle_{r}=1-\frac{2M}{r}.
\end{equation} The Hartle Hawking state is a quasi-free KMS state at $\beta=8\pi M$ with respect to the corresponding dynamics \cite{Wald94}. According to \cite{Page82}, in this state the Wick square takes the approximate value 
\begin{equation}\label{WickSchw}
 \omega_{8M\pi}(\wick{\phi^2}_{{}_{\st{Schw}}}(r))\approx\frac{1}{12(8\pi M)^2}\left(1+\frac{2M}{r}+\left(\frac{2M}{r}\right)^2+\left(\frac{2M}{r}\right)^3\right).
\end{equation}
This quantity approaches $\frac{1}{12(8\pi M)^2}$ as $r\rightarrow\infty$. Therefore, the inverse local temperature of the Hartle Hawking state equals $8\pi M$ at spatial infinity. We see that since $-\langle K,K\rangle_{r}\lim_{r\rightarrow\infty}=1$, the local temperature agrees with the scaled inverse KMS parameter at large radii. This resembles the situation in the previous examples.
\begin{observation}\label{Obs}
In Anti de Sitter, Rindler and Schwarzschild spacetime, for the Hawking-Unruh type states the inverse scaled KMS parameter and the local temperature agree at spatial infinity. 
\end{observation}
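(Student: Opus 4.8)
The plan is to read off the asserted coincidence separately in each of the three spacetimes, combining the explicit expressions for $\omega_\beta(\wick{\phi^2})$ recorded above with the fact that the relevant KMS states are quasi-free. Since we have adopted \eqref{CR} as the renormalization prescription for some $n\geq 2$, Corollary \ref{CorTherm} applies: a quasi-free state that is $\mathbf{T}_{2}(x)$-thermal, i.e.\ one with $\omega(\wick{\phi^2}_{{}_{\st{M}}}(x))\geq 0$, is automatically $\mathbf{T}_{2n}(x)$-thermal with a \emph{sharp} local temperature, so by \eqref{LocalTemp} the local temperature is given pointwise by $T(x)^2 = 12\,\omega(\wick{\phi^2}_{{}_{\st{M}}}(x))$. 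The second ingredient is simply the definition $\beta_{{}_{\mathrm{KMS}}}(x) = \sqrt{-\langle K,K\rangle_{x}}\,\beta$ together with the behaviour of $-\langle K,K\rangle_{x}$ along the flow lines of $K$ as the trajectory is pushed to spatial infinity. The claim then reduces to comparing these two quantities in the limit.

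For Anti de Sitter and Rindler spacetime I would specialise \eqref{WickAdS2} and \eqref{WickRindler} to the Hawking--Unruh value $\beta=2\pi$. In both formulas the factor $\tfrac{1}{12\beta^2}-\tfrac{1}{12(2\pi)^2}$ then vanishes identically, so $\omega_{2\pi}(\wick{\phi^2})=0$ everywhere and hence $T(x)\equiv 0$. On the other hand $-\langle K,K\rangle_{x}\to\infty$ as the trajectory recedes to spatial infinity — stated explicitly for $\st{AdS}$, and in $\st{R}$ following from $-\langle K,K\rangle_{x}=1/a^2(x)$ with $a^2(x)\to 0$ — so that $1/\beta_{{}_{\mathrm{KMS}}}(x) = \bigl(2\pi\sqrt{-\langle K,K\rangle_{x}}\bigr)^{-1}\to 0$. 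Both sides therefore tend to zero and agree.

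For exterior Schwarzschild I would take $\beta=8\pi M$, the Hartle--Hawking value, and use Page's approximation \eqref{WickSchw}: as $r\to\infty$ its right-hand side converges to $\tfrac{1}{12(8\pi M)^2}$, so the sharp local temperature satisfies $T(\infty)^2 = 12\cdot\tfrac{1}{12(8\pi M)^2} = (8\pi M)^{-2}$, i.e.\ $T(\infty)=1/(8\pi M)$. Since $-\langle K,K\rangle_{r}=1-\tfrac{2M}{r}\to 1$, one has $\beta_{{}_{\mathrm{KMS}}}(r)=\sqrt{1-\tfrac{2M}{r}}\,\,8\pi M\to 8\pi M$, hence $1/\beta_{{}_{\mathrm{KMS}}}(r)\to 1/(8\pi M)=T(\infty)$, again a coincidence. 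There is no genuine obstacle here — the statement is essentially a collation of the three computations already carried out — but two points deserve a little care: first, one must check in each case that the KMS state is really an LTE state with sharp local temperature, which is immediate for Schwarzschild (where $\omega(\wick{\phi^2})>0$) and is the boundary case $\omega(\wick{\phi^2})=0$, $T\equiv 0$, for $\st{AdS}$ and $\st{R}$; second, ``spatial infinity'' must be understood as a limit taken along the Killing trajectories (e.g.\ $x_2^2-x_0^2\to\infty$ in $\st{AdS}$, $r\to\infty$ in $\st{Schw}$), consistently with the fact that $-\langle K,K\rangle_{x}$ is constant along each individual trajectory; and for Schwarzschild one should recall that \eqref{WickSchw} is only approximate, but exact in the asymptotic regime $r\to\infty$ that is used, so the conclusion is unaffected.
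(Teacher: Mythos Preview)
Your proposal is correct and matches the paper's approach exactly: the Observation is not given a separate proof in the paper but is simply a collation of the three computations carried out in the preceding paragraphs (\eqref{WickAdS2}, \eqref{WickRindler}, \eqref{WickSchw} together with the stated behaviour of $-\langle K,K\rangle_x$), and you have reproduced precisely that argument. Your additional remarks on sharpness of the local temperature via Corollary~\ref{CorTherm} and on the approximate nature of \eqref{WickSchw} are accurate and make explicit points the paper leaves implicit.
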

Recall that for $\st{R}$ and $\st{Schw}$, these findings are not influenced by the renormalization constant $\alpha$. However, it is our choice of $\alpha$ which entails that the situation in $\st{AdS}$ is similar to these examples.

We finally consider de Sitter spacetime $\st{dS}$, which can be represented as the hyperboloid
\begin{equation*}
 \{x\in\mathbb{R}^5~|~-x_{0}^2+x_{1}^2+x_{2}^2+x_{3}^2+x_{4}^2=\rho^2\}
\end{equation*}
with radius $\rho$. The Killing vector field $K:=x_{1}\partial_{0}-x_{0}\partial_{1}$ is positive time-like only in the wedge
\begin{displaymath}
 W^{+}=\{ x\in\st{dS}~|~~ |x_{0}|< x_{1} \}.
\end{displaymath}
$ W^{+}$ is the union of the past and future of the time-like geodesic
\begin{equation}\label{GeoDeSi}
\gamma(t):=(\rho\sinh(t),\rho\cosh(t),0,0,0),
\end{equation}
which is among the flow lines of $K$. As a globally hyperbolic spacetime, the interior of the wedge is called static de Sitter spacetime. Note that there is no spatial infinity for the observers on flow lines of $K$, so there is no analogue to Observation \ref{Obs} in this case.

But still, for $\beta>0$ one may construct quasi-free KMS states $\omega_{\beta}$ for the dynamics induced by $K$, and Buchholz and Schlemmer have calculated the expectation value of the Wick square in these states \cite{BuchSchl07,Schl05}.\footnote{As pointed out by Stottmeister \cite{Stot09}, there appears an error in \cite{BuchSchl07} in the derivation of this result. See \cite{Solv12} for details.}
With our choice of $\alpha$, it reads
\begin{equation}\label{WickdS}
 \omega_{\beta}(\wick{\phi^2}_{{}_{\st{dS}}}(x))=\frac{1}{-\langle K,K\rangle_{x}}\left(\frac{1}{12\beta^2}-\frac{1}{12(2\pi)^2}\right),
\end{equation}
showing that the Gibbons-Hawking state ($\beta=2\pi$) has a local temperature of zero. Note that the latter state is the only KMS state which can be extended to a full Hadamard state on all of de Sitter spacetime \cite{NarPetThi96}. In $\st{dS}$, the four acceleration is given by
\begin{equation}
\frac{1}{-\langle K,K\rangle_{x}}=a^2(x)+\frac{1}{\rho^2}.
\end{equation}

We can now use what we have learned in the previous examples. As a consequence of our choice of $\alpha$, there is an interesting relation between the inverse scaled KMS parameter $1/\beta_{{}_\mathrm{KMS}}(x)$ and the local temperature $T(x)$ in the wedges of Rindler, de Sitter and Anti de Sitter spacetime. In fact, it is easy to see that
\begin{equation}\label{KMSvsLocal}
 \frac{1}{\beta^2_{{}_\mathrm{KMS}}(x)}=\pm\frac{1}{(2\pi\rho)^2}+\frac{a^2(x)}{(2\pi)^2}+T^2(x),
\end{equation}
where the $+$ sign holds for de Sitter, the $-$ sign holds for Anti de Sitter and for Rindler spacetime, the first term on the right hand side is absent.

We see that there is a geometric contribution due to spacetime curvature corresponding to the first term on the RHS of \eqref{KMSvsLocal}, while the effect of the acceleration is encoded in the second one (i.e.\ the Unruh effect). These two contributions combine with the local temperature of the state, represented by the third term in \eqref{KMSvsLocal}, so that the state satisfies the KMS condition at parameter $\beta_{{}_\mathrm{KMS}}(x)$. Thus the passivity of these states is not only thermal in origin, but also stems from other effects, which are embodied in equation \eqref{KMSvsLocal}. This represents a generalization of formulas for the Hawking-Unruh type states discussed in \cite{NarPetThi96,DesLev97,Jac98}.

We thus propose the following prescription for fixing the renormalization freedom of the thermal observables.
\begin{condition}\label{CondRenKMS}
The renormalization constants in composite fields serving as thermal observables must be chosen according to the following condition. In Hawking-Unruh type KMS states in curved spacetime the scaled inverse KMS parameter has to approach the local temperature obtained with these fields on Killing trajectories tending to space-like infinity.
\end{condition}
In the massless, conformally coupled case, this choice entails the conformal covariance of the even Wick powers in the sense of \cite{Pin09}. More precisely, our choice makes the Wick square a conformally covariant quantum field. Since we also adopt condition \eqref{CombCurved}, all even Wick powers are also conformally covariant, which can be deduced from \cite[Theorem 3.1]{Pin09}.\footnote{Note that our value of the renormalization constant differs from the one given in \cite{Pin09}, due to an error in that reference. This is explained in more detail in \cite{Solv12}.} We note that with our choice for $\alpha$, the Gibbons Hawking state in de Sitter, the Unruh Hawking type state in Anti de Sitter and the Unruh state in Rindler spacetime all have a local temperature of zero. This is not surprising, since all of these states are conformal vacua in their respective spacetimes \cite{BirDav84}. But the condition should also be applicable in massive theories.

Buchholz and Schlemmer \cite{BuchSchl07} have proposed an $\alpha$ such that, in static de Sitter spacetime, the ground state $\beta\rightarrow\infty$ has local temperature zero on the geodesic $\gamma(t)$ in equation \eqref{GeoDeSi}.\footnote{On the geodesic, the inverse local temperature then agrees with the KMS parameter for all KMS states.} The motivation behind this choice is that, while the time-like Killing trajectories in $\st{dS}$ are geometrically quite unlike time translations in Minkowski spacetime, at least along the geodesic the KMS states are supposed to behave like equilibrium states when tested with scalar observables.

The argument of Buchholz and Schlemmer only works in de Sitter spacetime; in fact, it can be seen that with their choice, the Hawking-Unruh type state in Anti de Sitter spacetime is not in LTE. We therefore propose Condition \ref{CondRenKMS}, which covers the Hawking-Unruh type states in Rindler, de Sitter and Anti de Sitter spacetimes.

\section{The thermal energy tensor}

Finally, we discuss a thermal observable which can be interpreted as the thermal energy tensor. As we shall see, in $\mathbf{T}_{\epsilon}$-thermal states (see equation \eqref{Teps} below), one has a local equivalent of the Stefan Boltzmann law. In addition, as has been noted in earlier work on LTE, the usual relation between pressure and energy density for the equilibrium fields leads to an evolution equation for the temperature, which we also briefly discuss here. Moreover, the thermal energy tensor can be used to determine the local rest systems in which a state is in local thermal equilibrium. We note that our definition of the thermal energy tensor differs from the one used previously in the literature \cite{SchlVerc08,Solv10}.

The expectation value of the thermal energy tensor of the scalar field in global equilibrium states in Minkowski spacetime is given by
\begin{displaymath}
E_{ab}(\beta)=\frac{\pi^2}{90\beta^4}(4e_{a}e_{b}+\eta_{ab}),
\end{displaymath}
implying the Stefan-Boltzmann law. Here, $e_{a}$ fixes the ``rest system'' of the medium. An apparent candidate for a thermal observable measuring this quantity is the usual energy momentum tensor $T_{ab}$; but it has been shown in \cite{BuchOjimRoos02} and \cite{Solv10} that this tensor is not a viable thermal observable. This may be seen by decomposing it according to
\begin{equation}\label{Tab}
 T_{ab}=\epsilon_{ab}+\frac{1}{12}(\nabla_{a}\nabla_{b}-g_{ab}\Box)\wick{\phi^2}+\frac{1}{4}g_{ab}\,\wick{\phi P\phi}.
\end{equation}
The first term on the right hand side of \eqref{Tab} is given by
\begin{equation}\label{ThermEnTen}
 \epsilon_{ab}:=-\wick{\phi(\nabla_{a}\nabla_{b}\phi)}+\frac{1}{4}\nabla_{a}\nabla_{b}\wick{\phi^{2}}+\frac{1}{6} R_{ab}\wick{\phi^2}+\frac{1}{4}g_{ab}\,\wick{\phi P\phi}.
\end{equation}
The second term in \eqref{Tab} is a total derivative, so it vanishes in the translationally invariant global equilibrium states in Minkowski spacetime. Finally, $\wick{\phi P\phi}$ is given by some state-independent function $\wick{\phi P\phi} =: U(x)\mathbbm{1}$. This term is responsible for the conformal anomaly, see e.g.\ \cite{More03,HollWald05} for details. Moreover, $U(x)\equiv 0$ in Minkowski spacetime. Therefore, in the global equilibrium states in Minkowski spacetime, both $T_{ab}$ and $\epsilon_{ab}$ have the same expectation values, namely
\begin{displaymath}
 \omega_{\beta}(\epsilon_{{}_{\st{Mink}}ab}(x))=E_{ab}(\beta)=\omega_{\beta}(T_{{}_{\st{Mink}}ab}(x)).
\end{displaymath}
Therefore, $\epsilon_{ab}$ is a candidate for a thermal energy tensor.

We now use the KMS states in the previous section to show that our $\epsilon_{ab}$ is preferable to the previous proposals, which consisted of the first two terms in \eqref{ThermEnTen}. To this end, let us consider the set of thermal observables
\begin{equation}\label{Teps}
 \mathbf{T}_{\epsilon}:=\{\mathbbm{1}\,,\,\wick{\phi^2}\,,\,\epsilon_{ab}\}.
\end{equation}
For the trace of $\epsilon_{ab}$, we find
 \begin{equation}\label{TraceEpsNew}
 \epsilon^{a}{}_{a}=\frac{1}{4}\Box\wick{\phi^2}.
\end{equation}
In contrast to previous prescriptions for the thermal energy tensor in curved spacetime, this relation does not include any explicit curvature terms. From \eqref{TraceEpsNew} and the translation invariance of the global equilibrium states it follows that for a $\mathbf{T}_{\epsilon}(\mathcal{O})$-thermal state $\omega$, one necessarily has
\begin{equation}
 \Box_{{}_{\st{M}}}\omega(\wick{\phi^2}_{{}_{\st{M}}}(x))=0
\end{equation}
in the interior of $\mathcal{O}$. Thus, by equation \eqref{LocalTemp}, $\mathbf{T}_{\epsilon}(\mathcal{O})$-thermality entails an evolution equation for the mean of the square of the local temperature, i.e.\
\begin{equation}\label{EvLocTemp}
 \Box_{{}_{\st{M}}}\overline{T^2(x)}=0.
\end{equation}
As the thermal energy tensor presented here differs from previous ones, the evolution equation \eqref{EvLocTemp} differs from the one given earlier in the literature as well. Our choice, however, is justified by investigating the preceding examples of KMS states. In fact, the Hawking-Unruh type states in de Sitter, Anti de Sitter and Rindler spacetimes satisfy condition \eqref{EvLocTemp} everywhere. Due to additional curvature terms, this would not have been the case with the older definition of the thermal energy tensor appearing in the literature. For $\beta$ other than $2\pi$, the KMS states are generally not $\mathbf{T}_{\epsilon}$-thermal, showing that they deviate quite substantially from an equilibrium situation.

\section{Conclusion}

We have identified particular Wick powers by which a sharp local temperature is assigned to quasi-free LTE states in any spacetime. The remaining freedom consists of the choice of the constant $\alpha$ in the Wick square. Using KMS states in simple stationary spacetimes, we were able to fix $\alpha$ in a physically meaningful way. The fact that the resulting field is conformally covariant provides further support for this choice. We also proposed an improved thermal energy tensor for which the Hawking-Unruh type states in de Sitter and Anti-de Sitter satisfy condition \eqref{EvLocTemp}. Note although we have fixed the renormalization and energy momentum tensor by testing KMS states in some particularly symmetric spacetimes, these choices then carry over to all curved spacetimes.

Most remarkably, we have seen that the KMS property and hence passivity do not entail local thermal equilibrium. In fact, the inverse KMS parameter can in general not be identified with the local temperature in the sense of the zeroth law. But it may be related to it in an interesting way, as is exemplified by equation \eqref{KMSvsLocal}.

\paragraph*{Acknowledgements}

I would like to thank Prof.\ D.\ Buchholz for helpful discussions and for sharing his insight that in curved spacetime, the KMS parameter need not be an inverse temperature in the sense of the zeroth law.

I am grateful to K.\ Fredenhagen and N.\ Pinamonti for bringing reference  \cite{Page82} to my attention and I would also like to thank J.\ Schlemmer and A.\ Stottmeister for their help in the identification of errors in \cite{BuchSchl07} and \cite{Stot09}.

Finally, I would like to thank the Institute for Theoretical Physics in G\"ottingen for hospitality and financial support.

\bibliography{Bibliography}

\end{document}